\title{Encoding Turing Machines\\ into the Deterministic $\l$-Calculus}
\author[1]{Ugo Dal Lago}
\affil[1]{Universit\'a di Bologna \& INRIA Sophia Antipolis, Italy}
\author[2]{Beniamino Accattoli}
\affil[2]{INRIA, UMR 7161, LIX, \'Ecole Polytechnique, France}
    \newtheorem{proposition}[theorem]{Proposition}
    \newtheorem{theorem}{Theorem}[section]
    \newtheorem{lemma}[theorem]{Lemma}
    \newtheorem{corollary}[theorem]{Corollary}
\newcommand{\myproof}[1]{
\ifthenelse{\boolean{withproofs}}{#1}{}
}
\newcommand{\la}[1]{\lambda #1.}
\newcommand{\tm}{t}
\newcommand{\tmtwo}{s}
\newcommand{\tmthree}{u}
\newcommand{\tmfour}{r}
\newcommand{\var}{x}
\newcommand{\vartwo}{y}
\newcommand{\varthree}{z}
\newcommand{\rootRew}[1]{\mapsto_{#1}}
\newcommand{\Rew}[1]{\rightarrow_{#1}}
\newcommand{\tob}{\Rew{\beta}}
\newcommand{\rtob}{\rootRew{\beta}}
\newcommand{\val}{v}
\newcommand{\valtwo}{\val'}
\newcommand{\valthree}{\val''}
\newcommand{\ctxholep}[1]{\langle #1\rangle}
\newcommand{\ctxhole}{\ctxholep{\cdot}}
\newcommand{\evctx}{E}
\newcommand{\evctxp}[1]{\evctx\ctxholep{#1}}
\newcommand{\nbvctxtwo}[1]{\nbvctxtwo{#1}}
\newcommand{\defeq}{:=}
\newcommand{\grameq}{::=}
\newcommand{\isub}[2]{\{#1/#2\}}
\renewcommand{\isub}[2]{\{#1{\shortleftarrow}#2\}}
\newcommand{\llbrace}{\{ \kern -0.27em \vert}
\newcommand{\rrbrace}{\vert \kern -0.27em \}}
\renewcommand{\l}{\lambda}
\newcommand{\ie}{{\em i.e.}\xspace}
\newcommand{\ih}{{\textit{i.h.}}\xspace}
\newcommand{\ignore}[1]{}
\newcommand{\myinput}[1]{\ifthenelse{\boolean{withimages}}{\input{#1}}{}}
\newcommand{\reflemma}[1]{Lemma~\ref{l:#1}}
\newcommand{\reflemmaeq}[1]{{L.\ref{l:#1}}}
\newcommand{\refcoroeq}[1]{Cor.~\ref{coro:#1}}
\newcommand{\set}[1]{\{#1\}}
\newcommand{\size}[1]{|#1|}
\newcommand{\state}{s}
\newcommand{\tomachhole}[1]{\leadsto_{#1}}
\newcommand{\withproofs}[1]{\ifthenelse{\boolean{withproofs}}{#1}{}}
\newcommand{\withoutproofs}[1]{\ifthenelse{\boolean{withproofs}}{}{#1}}
\newcommand{\detLam}{\Lambda_{\tt det}}
\newcommand{\tobdet}{\rightarrow_{det}}
\newcommand{\TM}{{\cal M}}
\newcommand{\alpone}{\Sigma}
\newcommand{\cods}[1]{\overline{#1}}
\newcounter{numberone}
\newcommand{\fix}{\theta}
\newcommand{\Nset}{\mathbb{N}}
\newcommand{\M}{\TM}
\newcommand{\elone}{a}
\newcommand{\elem}{a}
\newcommand{\cod}[2]{\overline{#1}^{#2}}
\newcommand{\alptwo}{\Delta}
\newcommand{\strone}{s}
\newcommand{\strtwo}{r}
\newcommand{\strthree}{p}
\newcommand{\append}{{\tt{append}}}
\newcommand{\appendalpchar}[2]{\append_{#1}^{#2}}
\newcommand{\appendalph}[1]{\append_{#1}}
\newcommand{\appendalphs}{\append}
\newcommand{\appendchar}[1]{\append^{#1}}
\newcommand{\lift}{{\tt lift}}
\newcommand{\liftalp}{\lift_{\alpone}}
\newcommand{\liftaux}{{\tt liftaux}}
\newcommand{\liftauxalp}{\liftaux_{\alpone}}
\newcommand{\flatalp}{{\tt flat}_{\alponeblank}}
\newcommand{\flataux}{{\tt flataux}}
\newcommand{\flatauxalp}{\flataux_{\alponeblank}}
\newcommand{\alponeblank}{\alpone_{\elemblank}}
\newcommand{\funconvert}[1]{f_{#1}}
\newcommand{\init}[2]{{\tt init}^{#1}_{#2}}
\newcommand{\inits}{{\tt init}}
\newcommand{\elemblank}{\Box}
\renewcommand{\state}{q}
\newcommand{\States}{Q}
\newcommand{\statein}{\state_{\mathit{in}}}
\newcommand{\statefin}{\state_{\mathit{fin}}}
\newcommand{\config}{C}
\newcommand{\configtwo}{D}
\newcommand{\final}[2]{{\tt final}^{#1}_{#2}}
\newcommand{\finals}{{\tt final}}
\newcommand{\transaux}{{\tt transaux}}
\newcommand{\trans}[1]{{\tt trans}^{#1}}
\newcommand{\transs}{{\tt trans}}
\newcommand{\function}{\overline{\M}}
\newcommand{\tomachtur}{\tomachhole{\M}}
\newcommand{\initconfig}{\config_{\tt in}^\M(\strone)}
\newcommand{\initconfigs}{\config_{\tt in}(\strone)}
\newcommand{\finalconfig}{\config_{\tt fin}^\M(\strone)}
\newcommand{\finalconfigs}{\config_{\tt fin}(\strone)}
\newcommand{\cont}{k}
\newcommand{\tuple}[1]{\langle #1 \rangle}
\date{}
\begin{document}
\maketitle



\section{Introduction}

This note is about encoding Turing machines into the $\l$-calculus. The encoding we show is interesting for two reasons:
\begin{enumerate}
  \item \emph{Weakly strategy independent}: the image of the encoding is a very small fragment of the $\l$-calculus, that we call \emph{the deterministic $\l$-calculus} $\detLam$. Essentially, it is the CPS (continuation-passing style) $\l$-calculus restricted to weak evaluation (\ie, not under abstractions). In $\detLam$ every term has at most one redex, and so all weak strategies collapse into a single deterministic evaluation strategy, because there are no choices between redexes to be made. The important consequence of this property is that every weak evaluation strategy then allows to simulate Turing machines, as well as any strong strategy reducing weak head redexes (or even only weak head redexes) first.
  \item \emph{Linear overhead}: the simulation is very efficient, when taking the number of $\beta$-steps as the time cost model for the deterministic $\l$-calculus. The simulation in $\detLam$ indeed requires a number of $\beta$-steps that is linear in the number of transitions of the encoded Turing machine, which is the best possible overhead. Therefore, not only all weak strategies simulate Turing machines, but they all do it \emph{efficiently}.
\end{enumerate}

The encoding has been conceived by Ugo Dal Lago in order to provide a simulation of Turing machines by head evaluation, that was part of the results of Accattoli and Dal Lago's \cite{DBLP:conf/rta/AccattoliL12}. The encoding itself appeared only in the technical report associated to \cite{DBLP:conf/rta/AccattoliL12} and it did not receive much attention.

This note is due to Beniamino Accattoli. Essentially, it is a commented and smoothed presentation of Dal Lago's development whose aims are:
\begin{enumerate}
\item to stress the relevance of the weak-strategy-independent property of the encoding,
\item to make the proof accessible to a wider public.
\end{enumerate}
The results of this note, and their consequences, are discussed and put in context in Accattoli's paper \emph{(In)Efficiency and Reasonable Cost Models} \cite{LSFA2017invited}.

\section{Deterministic $\l$-Calculus}
The language and the evaluation contexts of the deterministic $\l$-calculus $\detLam$ are given by:
\begin{center}
$\begin{array}{r@{\hspace{.5cm}}rlllllll}
    \mbox{Terms} & \tm,\tmtwo,\tmthree,\tmfour &\grameq& \val \mid  \tm \val\\
    \mbox{Values} & \val, \valtwo, \valthree & \grameq & \la\var\tm \mid \var\\\\
   \mbox{Evaluation Contexts} & \evctx  & \grameq & \ctxhole\mid  \evctx\val
\end{array}$
\end{center}

Note that 
\begin{itemize}
\item \emph{Arguments are values}: the right subterm of an application has to be a value, in contrast to what happens in the ordinary $\l$-calculus. 
\item \emph{Weak evaluation}: evaluation contexts are \emph{weak}, \ie they do not enter inside abstractions. 
\end{itemize}

Evaluation is then defined by:
\begin{center}
$\begin{array}{c@{\hspace{.5cm}}rll}
      \textsc{Rule at top level} & \multicolumn{3}{c}{\textsc{Contextual closure}} \\
	    \!(\la\var\tm)\tmtwo \rtob \tm \isub\var\tmtwo &
	    \multicolumn{3}{c}{\evctxp \tm \tobdet \evctxp \tmtwo \textrm{~~~if } \tm \rtob \tmtwo} \\
\end{array}$
\end{center}

\emph{Convention}: to improve readability we omit some parenthesis, giving precedence to application with respect to abstraction. Therefore $\la \var \tm \tmtwo$ stands for $\la \var (\tm \tmtwo)$ and not for $(\la \var \tm) \tmtwo$, that instead requires parenthesis.

The name of this calculus is motivated by the following immediate lemma.

\begin{lemma}
Let $\tm \in \detLam$. There is at most one $\tmtwo \in \detLam$ such that $\tm \tob \tmtwo$, and in that case $\tm$ is an application.
\end{lemma}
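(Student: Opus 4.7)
The plan is to proceed by structural induction on $\tm$, exploiting how narrow the grammar of evaluation contexts is. The key observation is that an evaluation context has the shape $\ctxhole$ or $\evctx \val$, so the hole always sits on the spine of left-applications; in particular, no evaluation context enters the argument of an application (which must be a value) nor the body of an abstraction. This already makes the ``$\tm$ is an application'' part transparent: if $\tm \tobdet \tmtwo$ via $\evctx$, then either $\evctx = \ctxhole$ and $\tm$ is a $\beta$-redex, hence an application, or $\evctx = \evctx' \val$ and $\tm$ has the form $\tm' \val$, again an application.

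For uniqueness, I would first handle the base cases: if $\tm = \var$ or $\tm = \la\var\tmthree$ (\ie $\tm$ is a value), then no evaluation context plugs a redex at $\tm$, because $\ctxhole$ would require $\tm$ to be a $\beta$-redex and $\evctx' \val$ would require $\tm$ to be an application; in either case there is no reduct, so the statement holds vacuously. In the inductive case $\tm = \tm_1 \val$, suppose $\tm \tobdet \tmtwo$ via $\evctx$. If $\evctx = \ctxhole$, then $\tm_1$ must be an abstraction $\la\var\tmthree$ and the reduct is uniquely $\tmthree\isub\var\val$. If $\evctx = \evctx' \val$, then the redex lies inside $\tm_1$, and by the induction hypothesis there is at most one reduct $\tm_1'$ of $\tm_1$, hence a unique $\tmtwo = \tm_1' \val$.

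Finally, I would argue that these two subcases cannot both occur for the same $\tm$. If $\evctx = \ctxhole$ is available, $\tm_1$ is an abstraction, which by the base-case analysis admits no reduction; conversely, if a reduction takes place in $\tm_1$, then $\tm_1$ is necessarily an application (by the ``$\tm$ is an application'' part applied inductively), so it is not an abstraction and $\evctx = \ctxhole$ cannot produce a redex. Combining the two guarantees uniqueness.

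I do not expect a real obstacle here: the whole argument is driven by the restrictive grammars of $\detLam$ and its evaluation contexts, and the only mild subtlety is the disjointness of the two subcases of the inductive step, which is exactly where the value-restriction on arguments pays off.
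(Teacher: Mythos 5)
Your proof is correct and follows essentially the same route as the paper's: induction on $\tm$, values do not reduce, and in the application case $\tm = \tm_1\val$ the two potential redex positions (at the root, requiring $\tm_1$ to be an abstraction, or inside $\tm_1$, requiring $\tm_1$ to be a reducible application) are shown to be mutually exclusive precisely via the ``$\tm$ is an application'' clause of the induction hypothesis. The only cosmetic difference is that you organise the case split on the shape of the evaluation context rather than on whether $\tm_1$ reduces, which changes nothing of substance.
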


\begin{proof}
By induction on $\tm$. If $\tm$ is a value then it does not reduce. Then assume that $\tm$ is an application $\tm = \tmtwo \val$. Let's apply the \ih to $\tmtwo$. Two cases:
\begin{enumerate}
\item \emph{$\tmtwo$ reduces and it is an application}: then $\tm$ has one redex, the one given by $\tmtwo$ (because $\ctxhole \val$ is an evaluation context), and no other one, because $\val$ does not reduce and $\tmtwo$ is not an abstraction by the \ih

\item \emph{$\tmtwo$ does not reduce}: if $\tmtwo$ is not an abstraction then $\tm$ is normal, otherwise $\tmtwo = \la\var\tmthree$ and $\tm = (\la\var\tmthree) \val$ has exactly one redex.
\end{enumerate}
\end{proof}

\paragraph{Fixpoint operator.} The encoding of Turing machines requires a general form of recursion, that is usually implemented via a fixpoint combinator. We use Turing's fixpoint combinator, in its call-by-value variant, that fits into $\detLam$ and that returns a fixpoint up to $\eta$-equivalence. It is defined as follows: let $\fix$ the term $\tm\tm$, where
$
\tm \defeq \la\var \la\vartwo \vartwo(\la\varthree \var \var \vartwo \varthree).
$
Now, given a term $\tmtwo$ let us show that $\fix\tmtwo$ is a fixpoint of $\tmtwo$ up to $\eta$-equivalence.
$$\begin{array}{lcllll}
\fix\tmtwo  
& = &
(\la\var \la\vartwo \vartwo(\la\varthree \var \var \vartwo \varthree)) \tm \tmtwo \\
& \tobdet & 
(\la\vartwo \vartwo(\la\varthree \tm \tm \vartwo \varthree)) \tmtwo \\

& \tobdet & 
\tmtwo (\la\varthree \tm \tm \tmtwo \varthree)\\ 
& = &
\tmtwo (\la\varthree \fix \tmtwo \varthree)\\
& =_\eta &
\tmtwo (\fix \tmtwo)
\end{array}$$

It is well-known that $\eta$-equivalent terms are indistinguishable in the $\l$-calculus (this is Bohm theorem). Therefore, we will simply use the fact that $\fix\tmtwo \tobdet^2 \tmtwo (\la\varthree \fix \tmtwo \varthree)$ wihtout dealing with $\eta$-equivalence. This fact will not induce any complication.

Let us a point out a subtlety. The evaluation of the fixpoint operator $\fix$ terminates because evaluation is weak, indeed $\fix \tobdet \la\vartwo \vartwo(\la\varthree \fix \vartwo \varthree)$ and the new occurrence of $\fix$ is under abstraction, so that the reduct is a $\detLam$ normal form. This fact explains why the encoding is \emph{not} strategy independent with respect to strong strategies: the fixpoint operator is used to implement the repeated application of the transition function of a Turing machine, then a (strong) perpetual strategy always diverges on the image of our encoding.

Last, in the encodings of the next sections the only point where duplication will take place will be in the use of the fixpoint combinator, all other constructions will be linear (affine, to be precise).

  
\section{Overview of the Encoding}

The encoding of a Turing machine $\M$ into the Deterministic $\l$-Calculus $\detLam$ is built in a few steps. There are various points worth stressing, some of them further organized in sub-points.

\begin{enumerate}
  \item \emph{Complexity parameters}: the complexity of the encoding will be measured with respect to the size $\size\strone$ of the input string $\strone$ on which $\M$ is executed. The cost model considered on $\detLam$ is the number of $\beta$-steps. In general, the number of $\beta$-steps taken by the simulation will also depend on the size of the alphabet $\alpone$ and the size of the set of states $\States$ of $\M$. These quantities, however, will be considered as fixed parameters of the problem, that is, as constants.

  \item \emph{Encoding and converting strings}:
  
  \begin{enumerate}    
	\item \emph{Alphabets and strings}: The encoding relies on Scott's encoding of alphabets and strings, that fits into $\detLam$. Strings will be used to encode the input, the output, and the tape. The encoding of a string $\strone$ will depend (linearly) on the cardinality of the underlying alphabet $\alpone$, so that encodings of the same string with respect to different alphabets will be different. Note, however, that the size $\size\alpone$ of the alphabet is considered as a fixed parameter, so that the encoding of an element of $\alpone$ takes constant space in our analysis. The set of states of a Turing machine will also be looked at as an alphabet.
	
	\item \emph{Appending on strings}: the encoding repeatedly uses a term $\appendalpchar{\alpone}{\elem}$ that takes as input the encoding $\cod\strone{\alpone^*}$ of a string with respect to $\alpone$ and appends the encoding of the element $\elem$ of $\alpone$, returning the encoding $\cod{(\elem\strone)}{\alpone^*}$ of the compound string $\elem\strone$. This will be needed for various reasons, but mainly to simulate the movement of the machine head, since we will have to append a character to the string on the left or on the right of the head. The appending operation will take constant time.
	
	\item \emph{Converting encodings of a string with respect to different alphabets}: A turing machine $\M$ receives and produces strings in a given alphabet $\alpone$ but $\M$ actually works on the alphabet $\alpone$ extended with the blank symbol $\elemblank$. Since the encoding of strings depends on the alphabet we have to show that there are terms $\liftalp$ and $\flatalp$ that can efficiently switch between $\cod\strone{\alpone^*}$ and $\cod\strone{(\alpone\uplus\set\elemblank)^*}$. The conversion operations will be linear in the size $\size\strone$ of the converted string $\strone$, and of course they will rely on the appending operation of the previous point.
  \end{enumerate}
  
  \item \emph{Encoding Turing machines}: 
  \begin{enumerate} 
  \item \emph{Encoding configurations}: a configuration $\config$ of $\M$ is encoded as a tuple $\cod\config\M$ of characters and strings depending on the alphabet $\alpone$ and the set of states $\States$ of $\M$. Actually, the strings of $\config$ appear in $\cod\config\M$ encoded with respect to the extended alphabet $\alpone\uplus\set\elemblank$. The encoding $\cod\config\M$ of a configuration $\config$ requires space linear in $\size\config$ (again, it is also linear in $\size\alpone$ and $\size\States$ but these quantities are considered as constants).
  
  \item \emph{Turning the input string into the initial configuration}:
    we will build a term $\initconfig$ that takes in input the encoding $\cod\strone{\alpone^*}$ of the input string $\strone$ and builds the initial configuration $\initconfig$ of $\M$ with respect to $\strone$. Note that $\init{\M}{}$ has to convert the representation of $\strone$ from the one with respect to $\alpone$ to the one with respect to $\alpone\uplus\set\elemblank$, and thus will use the conversion operation $\liftalp$. Building the initial configuration $\initconfig$ from $\cod\strone{\alpone^*}$ will take time linear in $\size\strone$. 
    
  \item \emph{Simulation of a machine transition}:
    we will build a term $\trans{\M}$ that encodes the transition function $\delta$ of $\M$ and that it will take a constant number of $\beta$-steps to simulate every single transition. Up to the many technicalities of the encoding, $\trans{\M}$ simply encodes $\delta$ as a table. To show that $\trans{\M}$ properly simulates $\delta$ is the most involved point of the proof.
    
  \item\emph{Extracting the output from the final configuration}:
    the simulation of the transition function produces an encoding of the final configuration $\finalconfig$ with respect to an output string $\strone$. Note that, as for the initial configuration, in $\finalconfig$ the string $\strone$ is encoded with respect to $\alpone\uplus\set\elemblank$ while the output is expected to be encoded with respect to $\alpone$. We then apply a term $\final{\M}{}$ that extracts $\cod\strone{ (\alpone\uplus\set\elemblank)^* }$ from $\finalconfig$ and then converts it to $\cod\strone{\alpone^*}$ by means of the flattening operations $\flatalp$. Such an extracting operation will take time linear in $\size\strone$.    
\end{enumerate}

  \item \emph{Continuation passing style}: to stick to $\detLam$ the encoding will actually be in continuation-passing style. Concretely, the terms  $\appendalpchar{\alpone}\elem$, $\liftalp$, $\flatalp$, $\init{\M}{}$, $\trans{\M}$, and $\final{\M}{}$ will all take as first argument a continuation term $\cont$. 
  
  \item \emph{The image of the encoding.}  The encoding will produce a closed $\l$-term $\tm \in \detLam$. The fact that it is closed is natural but it does not play any role. The fact that it falls into the deterministic $\l$-calculus instead is essential, not for the proof itself, but for the significance of the result for the study of reasonable cost models, explained in \cite{LSFA2017invited}.
\end{enumerate}

\section{Encoding and Converting Strings}

\paragraph{Encoding alphabets.} Let $\alpone=\{\elone_1,\ldots,\elone_n\}$ be a finite alphabet. Elements of $\alpone$ are encoded
as follows:
$$
\cod{\elone_i}{\alpone} \defeq \lambda\var_1.\ldots.\lambda\var_n.\var_i \;.
$$
When the alphabet will be clear from the context we will simply write $\cods\elone_i$. Note that 
\begin{enumerate}
\item the representation fixes a total order
on $\alpone$ such that $\elem_i\leq \elem_j$ iff $i\leq j$;
\item the representation of an element $\cod{\elone_i}{\alpone}$ requires space linear (and not logarithmic) in $\size\alpone$. But, since $\alpone$ is fixed, it actually requires constant space.
\end{enumerate}

\paragraph{Encoding strings.} A string in
$\strone\in\alpone^*$ is represented by a term
$\cod\strone{\alpone^*}$, defined by induction on the structure of $\strone$ as follows:
\begin{align*}
\cod{\varepsilon}{\alpone^*} & \defeq \lambda\var_1.\ldots.\lambda\var_n.\lambda\vartwo.\vartwo\;,\\
\cod{\elone_i\strtwo}{\alpone^*} & \defeq \lambda\var_1.\ldots.
   \lambda\var_n.\lambda\vartwo.\var_i\cod{\strtwo}{\alpone^*}.
\end{align*}
Note that the representation depends on the cardinality
of $\alpone$. In other words, if $\strone\in\alpone^*$
and $\alpone\subset\alptwo$, $\cod\strone{\alpone^*}\neq\cod\strone{\alptwo^*}$. In particular, $\size{\cod\strone{\alpone^*}} = \Theta(\size\strone\cdot\size\alpone)$. The size of the alphabet is however considered as a fixed parameter, and so we rather have $\size{\cod\strone{\alpone^*}} = \Theta(\size\strone)$.

\paragraph{Appending a character at the beginning of a string.} We now show that, for any given alphabet $\alpone$, there is a term $\appendalph{\alpone}$ that appends a character at the beginning of a string. The number of steps taken by $\appendalph{\alpone}$ to produce the compound string is linear in $\size\alpone$, and thus it is constant. Moreover, it takes a continuation $\cont$ as first parameter. Last, since here there is a single alphabet in use, we write $\appendalphs$, $\cods\elem$, and $\cods\strone$ rather than $\appendalph{\alpone}$, $\cod{\elem}{\alpone}$, and $\cod{\elem}{\alpone^*}$.

\begin{lemma}[Appending a character in constant time]
\label{l:append-char}
Let $\alpone$ be an alphabet and $\elem \in\alpone$ one of its characters. There is 
a term $\appendalpchar\alpone\elem$ 
such that for every
continuation $\cont$ and every string $\strone\in\alpone^*$, 
$$
\appendalpchar\alpone\elem \cont  \cods\strone
\tobdet^{O(1)} 
\cont \cods{(\elem\strone)}.
$$
\end{lemma}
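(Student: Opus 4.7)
The plan is to define $\appendalpchar{\alpone}{\elem}$ directly as a small term that takes its continuation $\cont$ and its string argument and hands to $\cont$ a wrapped version of the string, the wrapping adding exactly the outer constructor of the Scott encoding corresponding to $\elem$. Correctness then follows by inspection: the wrapped object is literally (syntactically equal to) the Scott encoding of $\elem\strone$, so nothing is lost up to $\beta\eta$ and the reduction uses a fixed number of $\beta$-steps, independently of $\strone$ and $\alpone$.

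Concretely, writing $\alpone=\{\elem_1,\ldots,\elem_n\}$ and $\elem=\elem_i$, I would set
$$\appendalpchar{\alpone}{\elem_i} \defeq \la\cont \la\varthree \cont (\lambda\var_1.\ldots.\lambda\var_n.\lambda\vartwo.\var_i\varthree).$$
This choice is dictated by unfolding the definition of the Scott encoding: we have $\cods{(\elem_i\strone)} = \lambda\var_1.\ldots.\lambda\var_n.\lambda\vartwo.\var_i\cods{\strone}$, which coincides exactly with what is passed to $\cont$ after instantiating $\varthree$ by $\cods{\strone}$ in the body.

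I would then check that $\appendalpchar{\alpone}{\elem_i}\in\detLam$: every application occurring inside it is either $\cont (\lambda\var_1.\ldots.\lambda\var_n.\lambda\vartwo.\var_i\varthree)$ or $\var_i\varthree$, whose right-hand sub-terms are respectively an abstraction and a variable, both values; moreover the evaluation context traversed by the two head reductions is weak. The reduction is then simply
\begin{align*}
\appendalpchar{\alpone}{\elem_i}\, \cont\, \cods{\strone}
& \tobdet (\la\varthree \cont (\lambda\var_1.\ldots.\lambda\var_n.\lambda\vartwo.\var_i\varthree))\, \cods{\strone} \\
& \tobdet \cont (\lambda\var_1.\ldots.\lambda\var_n.\lambda\vartwo.\var_i\cods{\strone}) \\
& = \cont\, \cods{(\elem_i\strone)},
\end{align*}
firing at each step the unique available weak redex. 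This is exactly $2$ $\beta$-steps, hence $O(1)$, uniformly in $\strone$ and in $\alpone$.

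There is essentially no obstacle here. The only subtle point is that the $\detLam$ constraint that arguments be values forbids the more direct $\la\cont\la\varthree \cont (\var_i\varthree)$: one must keep the full $\lambda\var_1.\ldots.\lambda\var_n.\lambda\vartwo$ wrapping around the body so that $\cont$ is applied to an abstraction rather than to the application $\var_i\varthree$. Happily, this same wrapping is also what is demanded by the Scott encoding of $\elem_i\strone$, so the final equality above is a strict syntactic identity rather than an up-to-$\beta$ equivalence, and the $\beta$-step count remains exactly $2$ regardless of the size of the alphabet.
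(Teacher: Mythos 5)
Your proposal is correct and is essentially the paper's own proof: both define $\appendalpchar{\alpone}{\elem}$ as a two-argument abstraction that hands the continuation the Scott constructor for $\elem$ wrapped around the string argument, so that after exactly two $\beta$-steps the result is syntactically $\cont\cods{(\elem\strone)}$. The only (cosmetic) difference is the order of the two abstractions: you take the continuation first, which actually matches the statement and the way the lemma is invoked later, whereas the paper's displayed definition binds the string first.
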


\begin{proof}
Define the term $\appendalpchar \alpone \elem \defeq \la\strone \la\cont \cont(\lambda \var_1.\ldots.\lambda \var_{\size\alpone}.\lambda \vartwo.\var_{i_\elem} \strone)$ where $i_\elem$ is the index of $\elem$ in the ordering of $\alpone$ fixed by its encoding, that appends the character $\elem$ to the string $\strone$ relatively to the alphabet $\alpone$. We have:
$$\begin{array}{rclcl}
  \appendalpchar \alpone \elem \cods\strone \cont
  & = &
  (\la\strone \la\cont \cont(\lambda \var_1.\ldots.\lambda \var_{\size\alpone}.\lambda \vartwo.\var_{i_\elem} \strone)) \cods\strone \cont \\
  & \tobdet^2 &
  \cont(\lambda \var_1.\ldots.\lambda \var_{\size\alpone}.\lambda \vartwo.\var_{i_\elem}\cods{\strone}) \\
  & = &
  \cont \cods{(\elem \strone)}.
\end{array}$$
\end{proof}

\paragraph{Lifting the encoding of a string to the alphabet extended with the blank symbol.}
We will have to convert the encoding $\cod\strone\alpone$ to the encoding $\cod\strone{\alpone \uplus\set\elemblank}$ with respect to the same alphabet $\alpone$ but extended with the blank symbol $\elemblank$, and viceversa. We use the notation $\alponeblank \defeq \alpone \cup \set\elemblank$ and we use the convention that the encoding with respect to $\alponeblank$ uses the same order on the elements of $\alpone$ extended by adding the blank symbol $\elemblank$ as the last element.

The underlying idea is very simple, and as expected: the encoding of every character $\cod\elem\alpone$ is sent to $\cod\elem\alponeblank$ and appended to the portion of the string already lifted so far, iterating over the string. Concretely, however, the transformation looks quite technical because the iteration is implemented via the fixpoint operator.

\begin{lemma}[Lifting the string encoding]
\label{l:lifting-alphabets}
Let $\alpone$ be a finite alphabet. There is a term $\liftalp$ such that 
$$
\liftalp \cont \cod\strone{\alpone^*}
\tobdet^{\theta(\size\strone)}
\cont \cod{\strone}{\alponeblank^*}.
$$
\end{lemma}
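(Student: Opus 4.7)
The plan is to iterate over the Scott-encoded input string using Turing's fixpoint combinator, setting $\liftalp \defeq \fix\; \liftauxalp$ for
\[
\liftauxalp \defeq \lambda f.\lambda\cont.\lambda\strone.\,\strone\; g_1\,\cdots\, g_n\, g_\varepsilon\, \cont,
\]
where the continuation $\cont$ is pushed \emph{past} the Scott pattern match so that every branch $g_i$ and $g_\varepsilon$ can be a value, as $\detLam$ requires. I would take $g_\varepsilon \defeq \lambda k.\,k\,\cod{\varepsilon}{\alponeblank^*}$ for the empty branch, simply feeding the lifted empty string to the received continuation. For the $i$-th character I would take
\[
g_i \defeq \lambda\strtwo.\lambda k.\,f\,(\lambda r.\appendalpchar{\alponeblank}{\elem_i}\,k\,r)\,\strtwo,
\]
whose intent is to recursively lift the tail through $f$, with a fresh continuation that first appends $\elem_i$ via Lemma~\ref{l:append-char} (instantiated at $\alponeblank$) and only then forwards the result to $k$.

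Next I would prove the reduction claim by induction on $\strone$. The prefix common to both cases consists of two steps to unfold the fixpoint into $\liftauxalp\,(\lambda z.\liftalp\,z)\,\cont\,\cod{\strone}{\alpone^*}$, three steps to discharge the three leading abstractions of $\liftauxalp$, and $n+1$ steps of Scott pattern matching. In the empty case one further step through $g_\varepsilon$ yields $\cont\,\cod{\varepsilon}{\alponeblank^*}$. In the case $\strone=\elem_i\strtwo$, two $\beta$-steps through $g_i$ give $(\lambda z.\liftalp\,z)\,(\lambda r.\appendalpchar{\alponeblank}{\elem_i}\,\cont\,r)\,\cod{\strtwo}{\alpone^*}$, one more $\beta$-step collapses the $\eta$-expanded recursive call to $\liftalp\,(\lambda r.\appendalpchar{\alponeblank}{\elem_i}\,\cont\,r)\,\cod{\strtwo}{\alpone^*}$, and the induction hypothesis (applied with the shifted continuation) reduces this to $(\lambda r.\appendalpchar{\alponeblank}{\elem_i}\,\cont\,r)\,\cod{\strtwo}{\alponeblank^*}$; one final $\beta$-step followed by an invocation of Lemma~\ref{l:append-char} then delivers $\cont\,\cod{\elem_i\strtwo}{\alponeblank^*}$.

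For the cost, each character of $\strone$ contributes only a constant overhead: the two fixpoint-unfolding steps, the three abstraction-eliminations, the $n+1$ pattern-match steps (with $n$ a fixed parameter), the two $\beta$-steps through $g_i$, the $\beta$-step collapsing the $\eta$-expanded recursive call, the final $\beta$-step through the wrapper, and the $O(1)$ cost of $\appendalpchar{\alponeblank}{\elem_i}$. The empty case contributes a symmetric $O(1)$ tail cost, and summing over the characters gives $\Theta(\size\strone)$ as required.

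The point requiring most care is the bookkeeping around the fixpoint: the recursive occurrence delivered by $\fix$ is $\eta$-expanded rather than literally $\liftalp$, so one must consume that extra $\lambda z$ at the right moment before applying the induction hypothesis, and at every step one must verify that each subterm sitting in argument position is genuinely a value—otherwise the construction slips outside $\detLam$. Beyond that discipline the argument is mechanical.
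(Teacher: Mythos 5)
Your proposal is correct and follows essentially the same route as the paper: the same fixpoint-based $\liftauxalp$ with a Scott-style case dispatch whose branches are exactly the paper's $N_i$ and $N$ (your $g_i$ and $g_\varepsilon$ up to renaming), the same handling of the $\eta$-expanded recursive call, and the same per-character constant-cost accounting yielding $\Theta(\size\strone)$.
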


\begin{proof}
Let:
$$
\liftauxalp \defeq \la\var \la\cont \lambda \strone.\strone N_1\ldots N_{\size\alpone}N\cont,
$$
where $N\defeq\la\vartwo \vartwo\cod{\varepsilon}{\alponeblank^*}$ and $
N_i\defeq
    \la \strtwo \la \cont \var (\la\strthree\appendalpchar{\alponeblank}{\elem_i} \cont \strthree)\strtwo
$
 for any $i$. Finally, define $\liftalp   \defeq \fix \liftauxalp$.

We prove a more precise statement. By \reflemmaeq{append-char} there is a constant $c$ such that $
\appendalph\alponeblank \cont \cod\elem\alponeblank \cod\strone\alponeblank
\tobdet^c 
\cont \cod{\elem\strone}\alponeblank$. Now, we prove that 
$$
\liftalp \cont \cod\strone{\alpone^*}
\tobdet^{n}
\cont \cod{\strone}{\alponeblank^*}.
$$
where $\size\strone \leq n \leq \funconvert{\alpone}(\size\strone) \defeq \size\strone(\size\alpone + c + 10)+\size\alpone+6$. The fact that $\size\strone \leq n$ will be evident, so that we rather prove the other inequality. By induction on $\size\strone$.

Define the abbreviations $P_i \defeq N_i\isub\var {\la\varthree\liftalp \varthree}$. 
$$\begin{array}{rclcl}
  \liftalp \cont \cod\strone{\alptwo^*}
  & = &
  \fix \liftauxalp \cont \cod\strone{\alptwo^*}\\
  & \tobdet^2 &
  \liftauxalp (\la\varthree \fix \liftauxalp \varthree) \cont \cod\strone{\alptwo^*}\\
  & = &
  \liftauxalp (\la\varthree\liftalp \varthree) \cont \cod\strone{\alptwo^*}\\
  & = &
  (\la\var \la\cont \lambda \strone.\strone N_1\ldots N_{\size\alpone}N\cont) (\la\varthree\liftalp \varthree) \cont \cod\strone{\alpone^*}\\
  & \tobdet &
  (\la\cont \lambda \strone.\strone N_1\ldots N_{\size\alpone}N\cont) \isub\var{ (\la\varthree\liftalp \varthree) } \cont \cod\strone{\alpone^*}\\
  & = &
  (\lambda \cont.\lambda \strone.\strone P_1\ldots P_{\size\alpone}N\cont)\cont\cod\strone{\alpone^*}\\
  & \tobdet^2 & 
  \cod\strone{\alpone^*}P_1\ldots P_{\size\alpone}N\cont
\end{array}$$
Two cases, depending on the nature of $\strone$.
\begin{itemize}
\item \emph{$\strone$ is the empty string $\varepsilon$}. Then:
\begin{eqnarray*}
\cod{\varepsilon}{\alpone^*}P_1\ldots P_{\size\alpone}N\cont
   & =& (\lambda\var_1.\ldots.\lambda\var_{\size\alpone}.I) P_1\ldots P_{\size\alpone}N\cont\\
   &\tobdet^{\size\alpone}& I N\cont\\
      &\tobdet& N\cont\\
   & = & (\la\vartwo \vartwo\cod{\varepsilon}{\alponeblank^*}) \cont \\
   &\tobdet &\cont\cod{\varepsilon}{\alponeblank^*}
\end{eqnarray*}

So we have $\liftalp \cont\cod\strone{\alpone^*} 
\tobdet^{n}
\cont\cod{\varepsilon}{\alponeblank^*}$ with $\size\strone = 0 \leq n = \size\alpone +6 = \funconvert{\alpone}(0) = \funconvert{\alpone}(\size\strone)$.

\item \emph{$\strone$ is a compound string $\elem_i\strtwo$}. Then:
 \begin{eqnarray*}
  \cod{\elem_i\strtwo}{\alpone^*}P_1\ldots P_{\size\alpone}N\cont
   &=& (\lambda\var_1.\ldots.\lambda\var_{\size\alpone}.\lambda\vartwo.\var_i\cod{\strtwo}{\alpone^*}) P_1\ldots P_{\size\alpone}N\cont\\
   &\tobdet^{\size\alpone}& 
   (\lambda\vartwo.P_i\cod{\strtwo}{\alpone^*}) N\cont\\
   &\tobdet& 
   P_i\cod{\strtwo}{\alpone^*}\cont
\end{eqnarray*}
by \ih there is $m$ such that $\size\strtwo \leq m \leq \funconvert{\alpone}(\size\strtwo)$ and
\begin{eqnarray*}
	P_i\cod{\strtwo}{\alpone^*}\cont 
	& = & 
	(\la \strtwo \la \cont (\la\varthree\liftalp \varthree) (\la\strthree\appendalpchar{\alponeblank}{\elem_i} \cont \strthree)\strtwo) \cod{\strtwo}{\alpone^*}\cont\\
	& \tobdet^3 &
	\liftalp (\la\strthree\appendalpchar{\alponeblank}{\elem_i} \cont \strthree)\cod{\strtwo}{\alpone^*}\\
        (\mbox{by \ih}) & \tobdet^m &
        (\la\strthree\appendalpchar{\alponeblank}{\elem_i} \cont \strthree)\cod{\strtwo}{\alponeblank^*}\\
        & \tobdet &
        \appendalpchar{\alponeblank}{\elem_i} \cont\cod{\strtwo}{\alponeblank^*}\\
        (\mbox{by \reflemmaeq{append-char}}) & \tobdet^c &
        \cont\cod{(\elem_i\strtwo)}{\alponeblank^*}\\
        & = &
        \cont\cod{\strone}{\alponeblank^*}
\end{eqnarray*}

Summing up, $\liftalp \cont\cod\strone{\alpone^*}
\tobdet^{n}
\cont\cod{\strone}{\alponeblank^*}$ with $n = m + c + \size\alpone + 10$. Note that $\funconvert{\alpone}(\size\strone) = \funconvert{\alpone}(\size\strtwo+1) = \funconvert{\alpone}(\size\strtwo) + \size\alpone + c + 10$ and since $\size\strtwo \leq m \leq \funconvert{\alpone}(\size\strtwo)$ we obtain $\size\strone = \size\strtwo +1 \leq n \leq \funconvert{\alpone}(\size\strone)$.
\end{itemize}
\end{proof}

\paragraph{Flattening the encoding of a string to the alphabet without the blank symbol.} The following lemma shows that also the opposite operation on strings, namely converting the encoding of a string with respect to $\alponeblank$ to the enconding with respect to $\alpone$, is implementable in $\detLam$ in constant time. The operation is very similar to the lifting one, the proof only has an additional subcase for $\elemblank$.

\begin{lemma}[Flattening the string encoding]
\label{l:flattening-alphabets}
Let $\alpone$ be a finite alphabet. There is a term $\flatalp$ such that 
$$
\flatalp \cont \cod\strone{\alponeblank^*}
\tobdet^{\theta(\size\strone)}
\cont \cod{\strone}{\alpone^*}.
$$
\end{lemma}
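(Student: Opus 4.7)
The plan is to mirror the proof of \reflemmaeq{lifting-alphabets} almost verbatim, only swapping the roles of $\alpone$ and $\alponeblank$ and adding one new branch to account for the blank symbol. Concretely, I would set $\flatalp \defeq \fix \flatauxalp$ where
$$\flatauxalp \defeq \la\var \la\cont \la\strone.\strone N_1\ldots N_{\size\alpone} N_{\size\alpone+1} N \cont,$$
with $N \defeq \la\vartwo.\vartwo \cod{\varepsilon}{\alpone^*}$, and, for each $i \in \{1,\dots,\size\alpone\}$,
$$N_i \defeq \la\strtwo \la\cont.\var(\la\strthree.\appendalpchar{\alpone}{\elem_i}\cont\strthree)\strtwo,$$
which is the lifting construction, but with the append now taken relative to $\alpone$ so that the accumulated prefix is built in the target encoding.

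The genuinely new ingredient is the $(\size\alpone+1)$-th branch $N_{\size\alpone+1}$, selected when the current head character is $\elemblank$. Since $\flatalp$ is only ever applied to strings whose intended content lies in $\alpone^*$, the natural choice is to make it silently recurse on the tail without appending, so that blanks are transparently erased:
$$N_{\size\alpone+1} \defeq \la\strtwo \la\cont.\var\cont\strtwo.$$
Under the implicit assumption $\strone \in \alpone^*$ this branch is never triggered and the output encoding coincides with $\cod\strone{\alpone^*}$; for a general $\strone \in \alponeblank^*$ the same computation would yield the encoding of $\strone$ with all blanks removed, still in linear time.

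The quantitative analysis then proceeds by induction on $\size\strone$ exactly as in the lifting proof. A constant number of initial $\tobdet$-steps unfolds $\fix\flatauxalp$ once and brings the term into the shape $\cod\strone{\alponeblank^*}P_1\ldots P_{\size\alpone} P_{\size\alpone+1} N \cont$, where $P_j \defeq N_j\isub\var{\la\varthree.\flatalp\varthree}$. The base case $\strone = \varepsilon$ is dispatched by $N$ in constantly many steps. For $\strone = \elem_i\strtwo$ with $i \leq \size\alpone$ the Scott encoding selects $P_i$, the term reduces in a constant number of steps to $\flatalp(\la\strthree.\appendalpchar{\alpone}{\elem_i}\cont\strthree)\cod\strtwo{\alponeblank^*}$, and I would conclude by applying the induction hypothesis followed by \reflemmaeq{append-char}, absorbing a further constant overhead at each recursive call.

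The main obstacle, as in \reflemmaeq{lifting-alphabets}, is bookkeeping rather than conceptual: tracking the exact constant overhead introduced by each unfolding of $\fix\flatauxalp$ and maintaining a bound of the form $\size\strone \leq n \leq \funconvert{\alpone}(\size\strone)$ for a suitable linear $\funconvert{\alpone}$, now incorporating one extra Scott-encoding branch. The one new subcase is the blank branch, which is handled simply by the remark above; no further complication arises, and the total cost remains $\theta(\size\strone)$.
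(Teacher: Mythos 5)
Your proposal is correct and matches the paper's proof essentially verbatim: same fixpoint structure, same branches $N_i$ with $\appendalpchar{\alpone}{\elem_i}$, the same blank branch $\la\strtwo\la\cont.\var\cont\strtwo$ that recurses on the tail without appending, and the same inductive cost bookkeeping. Your explicit remark that the construction erases blanks on general inputs in $\alponeblank^*$ is a small clarification the paper leaves implicit, but it changes nothing in the argument.
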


\begin{proof}
Let:
$$
\flatauxalp \defeq \la\var \la\cont \lambda \strone.\strone N_1\ldots N_{\size\alpone}N_{\elemblank}N\cont,
$$
where 
\begin{itemize}
\item $N\defeq\la\vartwo \vartwo\cod{\varepsilon}{\alpone^*}$
\item $N_i\defeq
    \la \strtwo \la \cont \var (\la\strthree\appendalpchar{\alpone}{\elem_i} \cont \strthree)\strtwo
$
 for any $i$
 \item     $N_{\elemblank} \defeq \la\strtwo \la\cont \var \cont\strtwo$
 \end{itemize}
Finally, define $\flatalp   \defeq \fix \flatauxalp$.

We prove a more precise statement. By \reflemmaeq{append-char} there is a constant $c$ such that $
\appendalph\alpone \cont \cod\elem\alpone \cod\strone\alpone
\tobdet^c 
\cont \cod{\elem\strone}\alpone$. Now, we prove that 
$$
\flatalp \cont \cod\strone{\alponeblank^*}
\tobdet^{n}
\cont \cod{\strone}{\alpone^*}.
$$
where $\size\strone \leq n \leq \funconvert{\alponeblank}(\size\strone) \defeq \size\strone(\size\alponeblank + c + 10)+\size\alponeblank+6$. The fact that $\size\strone \leq n$ will be evident, so that we rather prove the other inequality. By induction on $\size\strone$. The structure of the proof is exactly as for the lifting case, but for the last subcase handling the blank element $\elemblank$.

Define the abbreviations $P_i \defeq N_i\isub\var {\la\varthree\flatalp \varthree}$ and $P_\elemblank \defeq N_\elemblank\isub\var {\la\varthree\flatalp \varthree} = \la\strtwo \la\cont (\la\varthree\flatalp \varthree) \cont\strtwo$. 
$$\begin{array}{rclcl}
  \flatalp \cont \cod\strone{\alpone^*}
  & = &
  \fix \flatauxalp \cont \cod\strone{\alpone^*}\\
  & \tobdet^2 &
  \flatauxalp (\la\varthree \fix \flatauxalp \varthree) \cont \cod\strone{\alpone^*}\\
  & = &
  \flatauxalp (\la\varthree\flatalp \varthree) \cont \cod\strone{\alpone^*}\\
  & = &
  (\la\var \la\cont \lambda \strone.\strone N_1\ldots N_{\size\alpone}N_{\elemblank}N\cont) (\la\varthree\flatalp \varthree) \cont \cod\strone{\alponeblank^*}\\
  & \tobdet &
  (\la\cont \lambda \strone.\strone N_1\ldots N_{\size\alpone}N_{\elemblank}N\cont) \isub\var{ (\la\varthree\flatalp \varthree) } \cont \cod\strone{\alponeblank^*}\\
  & = &
  (\lambda \cont.\lambda \strone.\strone P_1\ldots P_{\size\alpone}P_{\elemblank}N\cont)\cont\cod\strone{\alponeblank^*}\\
  & \tobdet^2 & 
  \cod\strone{\alponeblank^*}P_1\ldots P_{\size\alpone}P_{\elemblank} N\cont
\end{array}$$
Two cases, depending on the nature of $\strone$.
\begin{itemize}
\item \emph{$\strone$ is the empty string $\varepsilon$}. Then:
\begin{eqnarray*}
\cod{\varepsilon}{\alponeblank^*}P_1\ldots P_{\size\alpone}P_{\elemblank}N\cont
   & =& (\lambda\var_1.\ldots.\lambda\var_{\size\alpone}.\lambda\var_{\elemblank}.I) P_1\ldots P_{\size\alpone}P_{\elemblank}N\cont\\
   &\tobdet^{\size\alponeblank}& I N\cont\\
      &\tobdet& N\cont\\
   & = & (\la\vartwo \vartwo\cod{\varepsilon}{\alpone^*}) \cont \\
   &\tobdet &\cont\cod{\varepsilon}{\alpone^*}
\end{eqnarray*}

So we have $\flatalp \cont\cod\strone{\alponeblank^*} 
\tobdet^{n}
\cont\cod{\varepsilon}{\alpone^*}$ with $\size\strone = 0 \leq n = \size\alponeblank +6 = \funconvert{\alponeblank}(0) = \funconvert{\alponeblank}(\size\strone)$.

\item \emph{$\strone$ is a compound string $\elem_i\strtwo$}. Two sub-cases:
\begin{enumerate}
\item \emph{$\elem_i$ is an element of $\alpone$}: then the reasoning goes exactly has in the lifting case.

\item \emph{$\elem_i$ is $\elemblank$}: 
 \begin{eqnarray*}
  \cod{\elemblank\strtwo}{\alponeblank^*}P_1\ldots P_{\size\alpone}P_{\elemblank}N\cont
   &=& (\lambda\var_1.\ldots.\lambda\var_{\size\alpone}.\lambda\var_\elemblank.\lambda\vartwo.\var_\elemblank\cod{\strtwo}{\alponeblank^*}) P_1\ldots P_{\size\alpone}P_{\elemblank}N\cont\\
   &\tobdet^{\size\alponeblank}& 
   (\lambda\vartwo.P_\elemblank\cod{\strtwo}{\alponeblank^*}) N\cont\\
   &\tobdet& 
	N_\elemblank \cod{\strtwo}{\alponeblank^*} \cont
\end{eqnarray*}
by \ih there is $m$ such that $\size\strtwo \leq m \leq \funconvert{\alponeblank}(\size\strtwo)$ and

\begin{eqnarray*}
  P_\elemblank\cod\strtwo{\alptwo^*} \cont
  & = &
  (\la\strtwo \la\cont (\la\varthree\flatalp \varthree)  \cont\strtwo)\cod{\strtwo}{\alptwo^*}\cont\\
  & \tobdet^2 &
  \flatalp \cont \cod{\strtwo}{\alponeblank^*}\\
  (\mbox{\ih}) & \tobdet^m &
  \cont \cod{\strtwo}{\alpone^*}
\end{eqnarray*}
Summing up, $\flatalp \cont\cod\strone{\alptwo^*}
\tobdet^{n}
\cont\cod{\strone}{\alpone^*}$ with $n = m + \size\alponeblank + 8$. Note that $\funconvert{\alponeblank}(\size\strone) = \funconvert{\alponeblank}(\size\strtwo+1) = \funconvert{\alponeblank}(\size\strtwo) + \size\alptwo + c + 10$ and since $\size\strtwo \leq m \leq \funconvert{\alponeblank}(\size\strtwo)$ we obtain $\size\strone = \size\strtwo +1 \leq n \leq \funconvert{\alponeblank}(\size\strone)$.
\end{enumerate}
\end{itemize}
\end{proof}


\section{Encoding Turing Machines}

\paragraph{Turing Machines.}
A deterministic Turing machine $\mathcal{M}$ is a tuple $(\alponeblank,\States,\statein,
\statefin,\delta)$ consisting of:
\begin{itemize}
  \item
  A finite alphabet $\alpone=\{\elem_1,\ldots,\elem_n\}$ plus a distinguished symbol $\elemblank$, called the \emph{blank symbol};
  \item
  A finite set $Q=\{\state_1,\ldots,\state_m\}$ of \emph{states};
  \item
  A distinguished state $\statein\in Q$, called the \emph{initial
  state};
  \item
  A distinguished state $\statefin\in Q$, called the \emph{final
  state}; 
  \item
  A partial \emph{transition function} $\delta:Q\times\alpone\rightharpoonup 
  Q\times\alpone\times\{\leftarrow,\rightarrow,\downarrow\}$ such that
  $\delta(\state_i,\elem_j)$ is defined iff $\state_i\neq \statefin$.
\end{itemize}
A configuration for $\mathcal{M}$ is a quadruple
$(\strone, \elem, \strtwo, \state) \in \alpone^*\times\alpone\times\alpone^*\times Q$ where:
\begin{itemize}
\item $\strone$ is the tape on the left of the head;
\item $\elem$ is the element on the cell read by the head;
\item $\strtwo$ is the tape on the right of the head;
\item $\state$ is the state of the machine.
\end{itemize}
Given a string $\strone \in\alpone^*$ we define:
\begin{itemize}
  \item the \emph{initial configuration} $\initconfigs$
for $\strone$ is $\initconfigs \defeq (\varepsilon,\elemblank,\strone,\statein)$,
  \item the \emph{final configuration} $\finalconfigs$ for $\strone$ is $\finalconfigs \defeq (\varepsilon,\elemblank,\strone,\statefin)$
\end{itemize}

\emph{An example of transition}: if $\delta(\state_i,\elem_j) = (\state_l,\elem_k,\leftarrow)$, then $\M$ 
evolves from $\config = (\strone\elem_p,\elem_j,\strtwo,\state_i)$ to $\configtwo = (\strone,\elem_p,\elem_k\strtwo,\state_l)$ and if the tape on the left of the head is empty, \ie if $\config = (\varepsilon,\elem_j,\strtwo,\state_i)$, then the content of the new head cell is a blank symbol, that is $\configtwo \defeq (\varepsilon,\elemblank,\elem_k\strtwo,\state_l)$. If $\M$ has a transition from $\config$ to $\configtwo$ we write $\config \tomachtur \configtwo$. A configuration whose state is the final state $\statefin$ is \emph{final} and cannot evolve.

A Turing machine $(\alponeblank,\States,\statein,
\statefin,\delta)$ computes the function $f:\alpone^*\rightarrow\alpone^*$ 
 in time $g:\Nset\rightarrow\Nset$ 
iff for every $\strone\in\alpone^*$, the
initial configuration for $\strone$ evolves to a final configuration
for $f(\strone)$ in $g(\size\strone)$ steps.

\paragraph{Notation.} From now on, we fix a Turing machine $\M = (\alponeblank,\States,\statein,\statefin,\delta)$ and encode it. To ease the notation we remove the superscripts to the encoding of strings and elements, unless when necessary to disambiguate. The terms encoding configurations and those building the initial configuration, representing the transition function, and extracting the output are also relative to $\M$ but we avoid adding $\M$ as a superscript, to ease the notation.

\paragraph{Encoding configurations.} A configuration $(\strone,\elem,\strtwo,\state)$ of a machine
$\M = (\alponeblank,\States,\statein, \statefin,\delta)$ is represented by
the term
$$
\cod{(\strone,\elem,\strtwo,\state)}{\M}
\defeq
\lambda x. (x
\cod{\strone^r}{\alponeblank^*}\;\cod{\elem}{\alponeblank}\;\cod{\strtwo}{\alponeblank^*}\;\cod{\state}{\States}).
$$
where $\strone^r$ is the string $\strone$ with the elements in reverse order. We will often rather write
$$
\cods{(\strone,\elem,\strtwo,\state)}
\defeq
\lambda x. (x
\cods{\strone^r}\;\cods\elem \; \cods\strtwo \; \cods\state).
$$
letting the superscripts implicit. To ease the reading, we sometimes use the following  notation for tuples $\tuple{\tm,\tmtwo,\tmthree,\tmfour} \defeq \lambda x. (x \tm \tmtwo \tmthree \tmfour)$, so that $\cods{(\strone,\elem,\strtwo,\state)} = \tuple{\cods{\strone^r}, \cods\elem, \cods\strtwo, \cods\state}$.

\paragraph{Turning the input string into the initial configuration.} The following lemma provides the term $\inits $ that builds the initial configuration.

\begin{lemma}[Turning the input string into the initial configuration]
\label{l:init-config}
Let $\M=(\alponeblank,\States,\statein,
\statefin,\delta)$ be a Turing machine. There is a term $\init\M{}$, or simply $\inits$, such that for every $\strone\in\alpone^*$
$$\inits \cont \cod{\strone}{\alpone^*} \tobdet^{\Theta(\size\strone)} \cont \cods\initconfigs$$ 
where $\initconfigs$ is the initial configuration of $\M$ for $\strone$.
\end{lemma}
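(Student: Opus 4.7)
The plan is to define $\inits$ as a continuation-passing term that first uses $\liftalp$ from Lemma~\ref{l:lifting-alphabets} to convert $\cod{\strone}{\alpone^*}$ into $\cod{\strone}{\alponeblank^*}$, and then plugs the lifted string into the right-tape slot of the tuple encoding $\initconfigs = (\varepsilon,\elemblank,\strone,\statein)$, while hard-coding the other (constant-sized) components. Concretely, I would set
\[
\inits \;\defeq\; \la\cont\, \la\strone\, \liftalp\, (\la\strone'\, \cont\, \tuple{\cod{\varepsilon}{\alponeblank^*},\, \cod{\elemblank}{\alponeblank},\, \strone',\, \cod{\statein}{\States}})\, \strone,
\]
so that the continuation fed to $\liftalp$ receives the lifted string, assembles $\cods\initconfigs$, and hands it to the external continuation $\cont$.

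The calculation then splits into three phases. First, a constant number of $\beta$-steps reduces $\inits\,\cont\,\cod{\strone}{\alpone^*}$ to $\liftalp\,(\la\strone'\,\cont\,\tuple{\ldots,\,\strone',\,\ldots})\,\cod{\strone}{\alpone^*}$. Second, Lemma~\ref{l:lifting-alphabets} is invoked, contributing $\Theta(\size\strone)$ steps to produce $(\la\strone'\,\cont\,\tuple{\ldots,\,\strone',\,\ldots})\,\cod{\strone}{\alponeblank^*}$. Third, one final $\beta$-step fills in the right-tape slot, yielding $\cont\,\cods\initconfigs$. Summing the three contributions gives an overall cost of $O(1) + \Theta(\size\strone) + O(1) = \Theta(\size\strone)$, as required.

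I do not expect any real obstacle, since the lemma essentially just repackages the output of the lifting lemma. The points that need care are that the CPS plumbing around $\liftalp$ is correct (so that the \emph{lifted} encoding, and not the original one, ends up in the tuple), that the constant components of the tuple are taken with respect to the correct alphabets — namely $\cod{\varepsilon}{\alponeblank^*}$ and $\cod{\elemblank}{\alponeblank}$ rather than their $\alpone$-versions, and $\cod{\statein}{\States}$ relative to the set of states — and that the reversed empty left tape coincides with the encoding of $\varepsilon$, so that no further reversal is needed.
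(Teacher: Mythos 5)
Your proposal is correct and matches the paper's proof essentially verbatim: the paper defines $\inits \defeq \la\cont\, \liftalp(\la\strtwo\, \cont\, \tuple{\cod{\varepsilon}{\alponeblank^*}, \cod{\elemblank}{\alponeblank}, \strtwo, \cod{\statein}{\States}})$ and runs exactly your three-phase calculation (constant setup, Lemma~\ref{l:lifting-alphabets} for $\Theta(\size\strone)$ steps, one final $\beta$-step to assemble the tuple). The only difference is your extra explicit abstraction over the input string, which merely adds one $\beta$-step and changes nothing.
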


\begin{proof}
Define
\begin{align*}
\inits &\defeq \la\cont \liftalp(\la\strtwo \cont
\tuple{\cod{\varepsilon}{\alponeblank^*}, \cod{\elemblank}{\alponeblank}, \strtwo, \cod{\statein}{\States}}).
\end{align*}
Then 
$$\begin{array}{rclcl}
  \inits \cont \cod\strone{\alpone^*}
  & = &
  (\la\cont \liftalp(\la\strtwo \cont\tuple{\cod{\varepsilon}{\alponeblank^*}, \cod{\elemblank}{\alponeblank}, \strtwo, \cod{\statein}{\States}})) \cont \cod\strone{\alpone^*}\\
  & \tobdet &
  \liftalp(\lambda \strtwo.(\cont\tuple{\cod{\varepsilon}{\alponeblank^*}, \cod{\elemblank}{\alponeblank}, \strtwo, \cod{\statein}{\States}}))\cod{\strone}{\alpone^*}\\
  (\mbox{by \reflemmaeq{lifting-alphabets}}) & \tobdet^{\Theta(\size\strone)} &
  (\lambda \strtwo.(\cont\tuple{\cod{\varepsilon}{\alponeblank^*}, \cod{\elemblank}{\alponeblank}, \strtwo, \cod{\statein}{\States}}))\cod{\strone}{\alponeblank^*}\\
  & \tobdet &
  \cont \tuple{\cod{\varepsilon}{\alponeblank^*}, \cod{\elemblank}{\alponeblank}, \cod{\strone}{\alponeblank^*}, \cod{\statein}{\States}}\\
  & = &
  \cont \cods{(\varepsilon, \elemblank, \strone, \statein)}\\
  & = &
  \cont \cods\initconfigs
\end{array}$$
\end{proof}

\paragraph{Extracting the output from the final configuration.}

\begin{lemma}[Extracting the output from the final configuration]
\label{l:final-config}
Let $\M=(\alponeblank,\States,\statein,
\statefin,\delta)$ be a Turing machine. There is 
a term $\final\M{}$, or simply $\finals$, such that  for every final
configuration $\config$ for $\strone\in\alpone^*$

$$\finals \cont \cods\config \tobdet^{\Theta(\size\strone)} \cont \cod{\strone}{\alpone^*}.$$
\end{lemma}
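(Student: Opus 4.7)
The plan is to define $\finals$ so that it first projects the right-tape component out of the tuple encoding of the configuration and then hands that string to $\flatalp$ together with the continuation. Projecting the third entry of $\tuple{t_1,t_2,t_3,t_4}$ amounts to applying the tuple to the four-argument selector that returns its third argument, so I would set
$$\finals \defeq \la\cont \la\configtwo \configtwo (\la a_1 \la a_2 \la a_3 \la a_4 \flatalp \cont a_3).$$
Every sub-term occurring as the right-hand side of an application in this definition is either an abstraction or a variable, hence a value, so $\finals$ lies in $\detLam$.

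The reduction of $\finals\,\cont\,\cods\config$ would then proceed in two phases. A constant number of $\beta$-steps suffices for the first: two steps to pass $\cont$ and $\cods\config$ to $\finals$, one step to force $\cods\config$ to feed its four components $\cod{\varepsilon}{\alponeblank^*}$, $\cod{\elemblank}{\alponeblank}$, $\cod{\strone}{\alponeblank^*}$ and $\cod{\statefin}{\States}$ to the selector (recall that a final configuration has the shape $\finalconfigs = (\varepsilon,\elemblank,\strone,\statefin)$), and four further steps to absorb the three unused components and reach $\flatalp\,\cont\,\cod\strone{\alponeblank^*}$. By \reflemmaeq{flattening-alphabets} this term then reduces in $\Theta(\size\strone)$ additional $\beta$-steps to $\cont\,\cod\strone{\alpone^*}$, for a grand total of $\Theta(\size\strone)$ $\beta$-steps, which is precisely the desired bound.

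I do not foresee any genuine obstacle. All the delicate machinery (iteration via the fixpoint, induction on the length of the string, handling of the blank case) is already encapsulated inside \reflemmaeq{flattening-alphabets} and is used here as a black box. The only routine verifications are that the selector is a value, so that the applications occurring in the definition of $\finals$ are legal in $\detLam$, and that the constant $\beta$-overhead surrounding the call to $\flatalp$ does not perturb the $\Theta(\size\strone)$ asymptotic bound. The overall structure is symmetric to that of \reflemmaeq{init-config}, with $\liftalp$ replaced by $\flatalp$ and construction of the input replaced by extraction of the output.
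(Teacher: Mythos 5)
Your proposal is correct and coincides with the paper's own proof: the paper defines $\finals \defeq \la \cont \lambda y.y(\lambda v.\lambda a.\lambda \strone.\lambda q.\flatalp \cont \strone)$, which is exactly your selector-based projection of the third (right-tape) component followed by a call to $\flatalp$, and the step count (a constant number of $\beta$-steps for the projection plus $\Theta(\size\strone)$ from \reflemmaeq{flattening-alphabets}) is the same. No gaps.
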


\begin{proof}
Define
\begin{align*}
\finals &\defeq \la \cont \lambda y.y(\lambda v.\lambda a.\lambda \strone.\lambda q.\flatalp \cont \strone)
\end{align*}
Then:
$$\begin{array}{rclcl}
  \finals \cont \cods\config
  & = &
  (\la \cont \lambda y.y(\lambda v.\lambda a.\lambda \strone.\lambda q.\flatalp \cont \strone)) \cont \cods\config\\
  & \tobdet^2 &
  \cods\config(\lambda v.\lambda a.\lambda \strone.\lambda q.\flatalp\cont \strone)\\
  & = &
  \cods{ (\strthree, \elem, \strone, \statefin) }(\lambda v.\lambda a.\lambda \strone.\lambda q.\flatalp \cont \strone)\\
  & = &
  (\lambda x.x\cod{\strthree^r}{\alponeblank^*}\;\cod{\elem}{\alponeblank}\;\cod{\strone}{\alponeblank^*}\;\cod{\statefin}{\States}) (\lambda v.\lambda a.\lambda \strone.\lambda q.\flatalp\cont \strone)\\
  & \tobdet &
  (\lambda v.\lambda a.\lambda \strone.\lambda q.\flatalp\cont \strone) \cod{\strthree^r}{\alponeblank^*}\;\cod{\elem}{\alponeblank}\;\cod{\strone}{\alponeblank^*}\;\cod{\statefin}{\States}\\ 
  &\tobdet^4 &
  \flatalp\cont\cod{\strone}{\alponeblank^*}\\
  (\mbox{by \reflemmaeq{flattening-alphabets}}) & \tobdet^{\Theta(\size\strone)} &
  \cont\cod{\strone}{\alpone^*}
\end{array}$$
\end{proof}

\paragraph{Simulation of a machine transition.} Now we show how to encode the transition function $\delta$ of a Turing machine as a $\l$-term in such a way to simulate every single transition in constant time. This is the heart of the encoding, and the most involved proof.

\begin{lemma}[Simulation of a machine transition]
\label{l:trans-sim}
Let $\M=(\alponeblank,\States,\statein,
\statefin,\delta)$ be a Turing machine. There is a term $\trans{\M}$, or simply $\transs$, 
such that for every configuration $\config$ 
\begin{itemize}
  \item \emph{Final configuration}:
  if $\config$ is a final configuration
  then $\transs \cont \cods\config \tobdet^{O(1)} \cont \cods\config$;
  \item \emph{Non-final configuration}: If $\config \tomachtur \configtwo$ then $\transs \cont \cods\config \tobdet^{O(1)} \transs \cont \cods\configtwo$.
\end{itemize}
\end{lemma}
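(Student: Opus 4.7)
The plan is to define $\transs$ as $\fix\transaux$, where $\transaux$ implements a single transition step and then re-invokes itself, via the fixpoint-supplied self-reference, on the resulting configuration. The body of $\transaux$ will have the shape $\la\var\la\cont\la y.yM$, where $y$ is the configuration, $M$ unpacks the 4-tuple to bind its four components $\cods{\strone^r},\cods\elem,\cods\strtwo,\cods\state$, and then dispatches on the state by applying $\cods\state$ to a case tree with one branch per element of $\States$. The variable $\var$ carries the recursive self-reference: after the fixpoint unfolding $\fix\transaux\tobdet^2\transaux(\la z.\fix\transaux z)$, it is bound to $\la z.\transs z$, so that applications of the form $\var\cont$ inside the handlers reduce in one $\beta$-step to $\transs\cont$.

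The branch corresponding to the final state $\statefin$ simply rebuilds the configuration tuple and passes it to $\cont$, yielding the first clause of the statement. The branch for each non-final state $\state_i$ further dispatches on $\cods\elem$ by applying it to a case tree of $\size\alponeblank$ handlers, one for each possible symbol $\elem_j$. The handler for the pair $(\state_i,\elem_j)$ is a hard-coded term implementing the single transition $\delta(\state_i,\elem_j)=(\state_l,\elem_k,d)$, with the target character and state encodings compiled in as constants.

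The shape of such a handler depends on the direction $d$. For $d=\downarrow$ it simply constructs $\tuple{\cods{\strone^r},\cods{\elem_k},\cods\strtwo,\cods{\state_l}}$ and passes it to $\var\cont$. For $d=\rightarrow$ the head must shift: the handler appends $\elem_k$ to the reversed left tape via $\appendalpchar\alponeblank{\elem_k}$, and pops the first character of $\strtwo$ by applying $\strtwo$ to a further case tree of $\size\alponeblank+1$ branches covering the sub-cases $\strtwo=\elem_p\strtwo'$ for each $\elem_p\in\alponeblank$ and $\strtwo=\varepsilon$; in the empty case the new head cell is $\cods\elemblank$ and the new right tape is $\cods\varepsilon$. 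The case $d=\leftarrow$ is symmetric, acting on $\cods{\strone^r}$ instead. In every non-final handler the reassembled configuration is passed to $\var\cont$, which produces the syntactic shape $\transs\cont\cods\configtwo$ required by the statement.

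Correctness is then a direct but tedious case analysis: for each pair $(\state_i,\elem_j)$ and each direction $d$, one checks that the handler fires in a constant number of $\beta$-steps, and that the configuration it produces is precisely the one obtained from $\config$ by applying $\delta$. The $\beta$-cost of a single invocation decomposes into the two steps of the fixpoint unfolding, $\Theta(\size\States+\size\alponeblank)$ steps for cascading through the two outer case trees (plus another $\Theta(\size\alponeblank)$ for popping from a tape in the shift cases), a constant cost for tuple reconstruction, and one call to $\appendalpchar\alponeblank{\cdot}$ of cost $O(1)$ by \reflemmaeq{append-char}. Since $\size\States$ and $\size\alponeblank$ are treated as fixed parameters, each simulated transition takes $O(1)$ $\beta$-steps, as required. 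The main obstacle is really bookkeeping rather than conceptual difficulty: writing out the many case-analysis sub-terms, making the Scott encodings line up with the correct number of binders in each branch, and consistently threading $\cont$ and the self-reference $\var$ through every handler.
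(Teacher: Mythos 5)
Your plan matches the paper's proof essentially step for step: $\transs=\fix\transaux$ with the self-reference substituted as $\la z.\transs z$, Scott-style dispatch first on the state and then on the head character, hard-coded handlers per $(\state_i,\elem_j)$ that either return $\cont\cods\config$ (final state) or rebuild the tuple and pass it to $\var\cont$, with the shift cases using $\appendalpchar{\alponeblank}{\cdot}$ on one tape and a further case analysis (including the empty/blank sub-case) on the other, and the same $O(\size\States+\size\alpone)=O(1)$ cost accounting. This is correct and essentially the same approach; the remaining work is indeed only the bookkeeping of writing out the terms.
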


\begin{proof}
Define
\[\begin{array}{rclll}
\transaux 
& \defeq & 
(\lambda x.\la\cont\lambda y.y(\lambda u.\lambda a.\lambda v.\lambda q.q M_1\ldots M_{\size\States} au\cont v)),\\
\transs 
& \defeq &
\fix \transaux,
\end{array}\]
where, for any $i$ and $j$:
\begin{eqnarray*}
M_i&\defeq&\lambda a.aN_i^1\ldots N_i^{\size\alpone};\\
N_i^j&\defeq&
    \left\{
   \begin{array}{ll}
   \lambda u.\lambda \cont .\lambda v.\cont \tuple{u,\cods{\elem_j}, v,\cods{\state_i} }
      \hfill\mbox{if $\state_i=\statefin$}\\
      
   \lambda u.\lambda \cont .\lambda v. x\cont \tuple{ u,\cods{\elem_h}, v,\cods{\state_l}} 
      \hfill\mbox{\qquad if $\delta(\state_i,\elem_j)=(\state_l,\elem_h,\downarrow)$}\\
      
   \lambda u.uP_1^{l,h}\ldots P_{\size\alpone}^{l,h}P^{l,h} 
      \hfill \mbox{if $\delta(\state_i,\elem_j)=(\state_l,\elem_h,\leftarrow)$}\\ 
      
   \lambda u.\lambda v.vR_1^{l,h}\ldots R_{\size\alpone}^{l,h}R^{l,h}u 
      \hfill \mbox{if $\delta(\state_i,\elem_j)=(\state_l,\elem_h,\rightarrow)$;}
   \end{array}
    \right.\\
    P_i^{l,h}
      & \defeq &
      \lambda u.\lambda \cont .\appendchar{\elem_h}(\lambda w.x\cont 
\tuple{u,\cods{\elem_i},w, \cods{\state_l}})\\
    P^{l,h}
      & \defeq &
      \lambda \cont .\appendchar{\elem_h}(\lambda w.x\cont \tuple{\cods{\varepsilon},\cods{\elemblank},w, \cods{\state_l}})\\
    R_i^{l,h}
      & \defeq &
      \lambda u.\lambda \cont .\appendchar{\elem_h}(\lambda w.x\cont 
      \tuple{w, \cods{\elem_i}, u, \cods{\state_l}})\\
    R^{l,h}
      & \defeq &
      \lambda \cont .\appendchar{\elem_h}(\lambda w.x\cont 
      	\tuple{ w, \cods\elemblank, \cods\varepsilon, \cods{\state_l} }
      )
\end{eqnarray*}

Let $C=(\strone,\elem_j,\strtwo,\state_i)$
  First, we need some
 abbreviations. For every $i$ and $j$ define:
 \begin{eqnarray*}
\States_i&\defeq& M_i \isub\var{ \lambda z.\transs z } ;\\ 
 T_i^j&\defeq& N_i^j \isub\var{ \lambda z.\transs z }.
 \end{eqnarray*}

Then:
\begin{center}
$\begin{array}{llllll}
\transs  \cont  \cods\config & = & \fix \transaux \cont  \cods\config \\
& \tobdet &
\transaux (\la\varthree \fix \transaux \varthree) \cont  \cods\config\\
& = &
\transaux (\la\varthree \transs  \varthree) \cont  \cods\config\\
& = &
(\lambda x.\la\cont\lambda y.y(\lambda u.\lambda a.\lambda v.\lambda q.qM_1\ldots M_{\size\States}au\cont v)) (\la\varthree \transs  \varthree) \cont  \cods\config\\
& \tobdet^3 &
\cods\config (\lambda u.\lambda a.\lambda v.\lambda q.((qM_1\ldots M_{\size\States}) \isub\var{ \la\varthree \transs  \varthree } au \cont v))\\
& = &
\cods\config (\lambda u.\lambda a.\lambda v.\lambda q.q\States_1\ldots \States_{\size\States} au \cont v)\\
& = &

\cods{(\strone,\elem_j,\strtwo,\state_i)} (\lambda u.\lambda a.\lambda v.\lambda q.q\States_1\ldots \States_{\size\States} au \cont v)\\
& = &
(\lambda x. (x
\cods{\strone^r} \;\cods{\elem_j} \;\cods\strtwo \;\cods{\state_i} )) (\lambda u.\lambda a.\lambda v.\lambda q.q\States_1\ldots \States_{\size\States} au \cont v)\\
& \tobdet &
(\lambda u.\lambda a.\lambda v.\lambda q.q\States_1\ldots \States_{\size\States} au \cont v) \cods{\strone^r} \;\cods{\elem_j} \;\cods\strtwo \;\cods{\state_i} \\
& \tobdet &
\cods{\state_i}  \States_1\ldots \States_{\size\States}    \cods{\elem_j} \cods{\strone^r}   \cont \cods\strtwo  \\
& = &
(\lambda\var_1.\ldots.\lambda\var_{\size\States}.\var_i) \States_1\ldots \States_{\size\States}    \cods{\elem_j} \cods{\strone^r}   \cont \cods\strtwo \\
& \tobdet^{\size\States} &
\States_i    \cods{\elem_j} \cods{\strone^r}   \cont \cods\strtwo \\
& = &
(\lambda a.aT_i^1\ldots T_i^{\size\alpone}u)   \cods{\elem_j}  \cods{\strone^r}  \cont \cods\strtwo\\

& \tobdet &
\cods{\elem_j}  T_i^1\ldots T_i^{\size\alpone} \cods{\strone^r}    \cont  \cods\strtwo\\

& = &
(\lambda\var_1.\ldots.\lambda\var_{\size\alpone}.\var_j) T_i^1\ldots T_i^{\size\alpone} \cods{\strone^r}    \cont  \cods\strtwo\\

& \tobdet^{\size\alpone} &
T_i^j
\cods{\strone^r}    \cont \cods\strtwo
\end{array}$
 \end{center}
 
 Now, consider the following 
 four cases, depending on the value of $\delta(\state_i,\elem_j)$:
 \begin{enumerate}
    \item \emph{Final state}: 
    if $\delta(\state_i,\elem_j)$ is undefined, then $\state_i = \statefin$ and $T_i^j \defeq    \lambda u.\la\cont \lambda v.\cont \tuple{u,\cods{\elem_j}, v,\cods{\state_i} }
$, by definition.
Then:
\begin{center}
$\begin{array}{llllll}
    T_i^j \cods{\strone^r}    \cont \cods\strtwo
	& = &
    (   \lambda u.\la\cont \lambda v.\cont\tuple{u,\cods{\elem_j}, v,\cods{\state_i} }
) \cods{\strone^r}    \cont \cods\strtwo\\
    & \tobdet^3 &
   \cont  \tuple{\cods{\strone^r}, \cods{\elem_j}, \cods\strtwo, \cods{\state_i}}\\
    &=&\cont  \cods{(\strone ,\elem_j,\strtwo ,\state_i)}\\
    &=&\cont  \cods\config\\
\end{array}$
\end{center}

    \item \emph{The head does not move}: 
    If $\delta(\state_i,\elem_j)=(\state_l,\elem_h,\downarrow)$, then $\configtwo = (\strone, \elem_h,\strtwo,\state_l)$ and
\begin{center}
$\begin{array}{llllll}
    T_i^j & \defeq & N_i^j \isub\var{ \lambda z.\transs z }\\
    & = &
    (\lambda u.\la\cont \lambda v.x\cont \tuple{u,\cods{\elem_h}, v\cods{\state_l}})\isub\var{ \lambda z.\transs z }\\
    & = &\lambda u.\la\cont \lambda v. (\lambda z.\transs z)\cont \tuple{u,\cods{\elem_h}, v\cods{\state_l}}
\end{array}$
\end{center}

    Then:
    \begin{center}
$\begin{array}{llllll}
    T_i^j\cods{\strone^r} \cods\strtwo  \cont 
    	& = & (\lambda u.\la\cont \lambda v.(\lambda z.\transs z)\cont \tuple{u,\cods{\elem_h}, v,\cods{\state_l}}) \cods{\strone^r}   \cont \cods\strtwo\\
    &\tobdet^3&(\lambda z.\transs z)\cont \tuple{\cods{\strone^r},  \cods{\elem_h},  \cods\strtwo,  \cods{\state_l}}\\
    &\tobdet&\transs \cont \tuple{ \cods{\strone^r},  \cods{\elem_h},  \cods\strtwo,\cods{\state_l}} \\
    & = &
    \transs \cont \cods{(\strone,  \elem_h,  \strtwo , \state_l)} \\
    & = &\transs \cont  \cods\configtwo
\end{array}$
\end{center}

    \item \emph{The head moves left}: the hypothesis of this sub-cases is that $\delta(\state_i,\elem_j)=(\state_l,\elem_h,\leftarrow)$. So $T_i^j$ is going to depend on $P_i^{l,h}$ and $P^{l,h}$ defined at the beginning of the proof. Since $l$ and $h$ are now fixed we lighten the notation, using $P_i$ and $P$ for $P_i^{l,h}$ and $P^{l,h}$. Moreover, we define the following further abbreviations
    $$\begin{array}{lll}
      U_i & \defeq & P_i\isub\var{\la \varthree \transs  \varthree}\\
            & = & (\lambda u.\la\cont \appendchar{\elem_h} (\lambda w.x \cont \tuple{u,\cods{\elem_i},w, \cods{\state_l}}))\isub\var{\la \varthree \transs  \varthree}\\
      & = & \lambda u.\la\cont \appendchar{\elem_h} (\lambda w.(\la \varthree \transs  \varthree) \cont \tuple{u,\cods{\elem_i},w, \cods{\state_l}} )\\\\

      U & \defeq & P\isub\var{\la \varthree \transs  \varthree}\\
      & = &  (\la\cont \appendchar{\elem_h} (\lambda w.x \cont \tuple{\cods{\varepsilon}, \cods{\elemblank}, w,\cods{\state_l}} ))\isub\var{\la \varthree \transs  \varthree}\\

      & = &  \la\cont \appendchar{\elem_h} (\lambda w.(\la \varthree \transs  \varthree) \cont \tuple{\cods{\varepsilon}, \cods{\elemblank}, w,\cods{\state_l}})
    \end{array}$$
    With these conventions we have
    $$\begin{array}{lll}
      T_i^j & = & N_i^j \isub\var{\la \varthree \transs  \varthree} \\
      & = &
      (\lambda u. uP_1\ldots P_{\size\alpone}P)
                                                \isub\var{\la \varthree \transs  \varthree}\\
      & = &
      \lambda u. uU_1\ldots U_{\size\alpone}U
    \end{array}$$
    Then
    $$\begin{array}{lll}
    T_i^j \cods{\strone^r}   \cont \cods\strtwo
    & = & 
    (\lambda u.uU_1\ldots U_{\size\alpone}U) \cods{\strone^r}   \cont \cods\strtwo\\
    & \tobdet &
    \cods{\strone^r}  U_1\ldots U_{\size\alpone}U   \cont \cods\strtwo\\
    \end{array}$$
    
    Two sub-cases, depending on whether $\strone$ is an empty or a compound string.
    
    \begin{enumerate}
	 \item \emph{$\strone$ is the empty string $\varepsilon$}. Then $\configtwo = (\epsilon, \elemblank, \elem_h\strtwo, \state_l)$ and  
    $$\begin{array}{lll}
    \cods\epsilon U_1\ldots U_{\size\alpone}U   \cont \cods\strtwo
    & = & \\
    (\lambda\var_1.\ldots.\lambda\var_{\size\alpone}.\lambda\vartwo.\vartwo) U_1\ldots U_{\size\alpone}U   \cont \cods\strtwo
    & \tobdet^{\size\alpone +1} &\\
    U   \cont \cods\strtwo
    & = & \\
    (\la\cont \appendchar{\elem_h} (\lambda w.(\la \varthree \transs  \varthree) \cont \tuple{\cods{\varepsilon}, \cods{\elemblank}, w,\cods{\state_l}}))  \cont \cods\strtwo 
    & \tobdet & \\
    \appendchar{\elem_h} (\lambda w.(\la \varthree \transs  \varthree)\cont \tuple{\cods{\varepsilon}, \cods{\elemblank}, w,\cods{\state_l}} ) \cods\strtwo 
    & \tobdet^{ O(1) } & \reflemmaeq{append-char}\\
    (\lambda w.(\la \varthree \transs  \varthree)\cont \tuple{\cods{\varepsilon}, \cods{\elemblank}, w,\cods{\state_l}} ) \cods{(\elem_h\strtwo)}
    & \tobdet \\
    (\la \varthree \transs  \varthree)\cont \tuple{\cods{\varepsilon}, \cods{\elemblank}, \cods{(\elem_h\strtwo)}, \cods{\state_l} }
    & = \\
    (\la \varthree \transs  \varthree)\cont \cods{(\varepsilon, \elemblank, \elem_h\strtwo, \state_l)}
    & = \\
    (\la \varthree \transs  \varthree)\cont  \cods\configtwo
    & \tobdet \\
    \transs  \cont \cods\configtwo
    \end{array}$$

    \item \emph{$\strone$ is a compound string $\strthree\elem_p$}. Then $\configtwo = (\strthree, \elem_p, \elem_h\strtwo, \state_l)$ and 
    $$\begin{array}{lll}
    \cods{\strone^r}  U_1\ldots U_{\size\alpone}U   \cont \cods\strtwo
    & = & \\
    \cods{\elem_p\strthree^r} U_1\ldots U_{\size\alpone}U  \cont \cods\strtwo
    & = & \\    
    (\lambda\var_1.\ldots.
   \lambda\var_{\size\alpone}.\lambda\vartwo.\var_p\cods{\strthree^r} ) U_1\ldots U_{\size\alpone}U  \cont \cods\strtwo
    & \tobdet^{\size\alpone+1} & \\    
    U_p\cods{\strthree^r}  \cont \cods\strtwo
    & = & \\    
(\lambda u.\la\cont \appendchar{\elem_h} (\lambda w.(\la \varthree \transs  \varthree) \cont 
\tuple{u,\cods{\elem_p},w, \cods{\state_l}}
)) \cods{\strthree^r}   \cont \cods\strtwo
    & \tobdet^2 & \\ 
\appendchar{\elem_h} (\lambda w.(\la \varthree \transs  \varthree)\cont \tuple{ \cods{\strthree^r}, \cods{\elem_p},w,\cods{\state_l}} ) \cods\strtwo  
 & \tobdet^{ O(1) } & \reflemmaeq{append-char}\\
(\lambda w.(\la \varthree \transs  \varthree)\cont \tuple{\cods{\strthree^r}, \cods{\elem_p},w,\cods{\state_l}}) \cods{(\elem_h\strtwo)}
 & \tobdet\\
(\la \varthree \transs  \varthree)\cont (\tuple{\cods{\strthree^r}, \cods{\elem_p}, \cods{(\elem_h\strtwo)},  \cods{\state_l}}
 & =\\
(\la \varthree \transs  \varthree)\cont \cods{(\strthree, \elem_p, \elem_h\strtwo, \state_l)}
 & =\\

 (\la \varthree \transs  \varthree)\cont  \cods\configtwo
 & \tobdet\\
\transs  \cont  \cods\configtwo
    \end{array}$$
\end{enumerate}
    
    \item \emph{The head moves right}: the hypothesis of this sub-cases is that $\delta(\state_i,\elem_j)=(\state_l,\elem_h,\rightarrow)$. So $T_i^j$ is going to depend on $R_i^{l,h}$ and $R^{l,h}$ defined at the beginning of the proof. Since $l$ and $h$ are now fixed we lighten the notation, using $R_i$ and $R$ for $R_i^{l,h}$ and $R^{l,h}$. Moreover, we define the following further abbreviations
$$\begin{array}{lll}
      V_i & \defeq & R_i\isub\var{\la \varthree \transs  \varthree}\\
            & = & (\lambda u.\lambda \cont .\lambda v.\appendchar{\elem_h}(\lambda w.x\cont 
      \tuple{w, \cods{\elem_i}, u, \cods{\state_l}}) v) \isub\var{\la \varthree \transs  \varthree}\\
      & = & \lambda u.\lambda \cont .\appendchar{\elem_h}(\lambda w.(\la \varthree \transs  \varthree)\cont 
      \tuple{w, \cods{\elem_i}, u, \cods{\state_l}})\\\\

      V & \defeq & R\isub\var{\la \varthree \transs  \varthree}\\
      & = &  (\lambda \cont .\lambda v.\appendchar{\elem_h}(\lambda w.x\cont 
      	\tuple{ w, \cods\elemblank, \cods\varepsilon, \cods{\state_l} }
      ) v)\isub\var{\la \varthree \transs  \varthree}\\

      & = &  \lambda \cont .\appendchar{\elem_h}(\lambda w. (\la \varthree \transs  \varthree) \cont 
      	\tuple{ w, \cods\elemblank, \cods\varepsilon, \cods{\state_l} }
      )
    \end{array}$$
    
    With these conventions we have
    $$\begin{array}{lll}
      T_i^j & = & N_i^j \isub\var{\la \varthree \transs  \varthree} \\
      & = &
      (\lambda u.\la\cont \lambda v. vR_1\ldots R_{\size\alpone}R\cont u)
                                                \isub\var{\la \varthree \transs  \varthree}\\
      & = &
      \lambda u.\la\cont\lambda v. vV_1\ldots V_{\size\alpone}V\cont u
    \end{array}$$
    Then
    $$\begin{array}{lll}
    T_i^j \cods{\strone^r}   \cont \cods\strtwo
    & = & 
    (\lambda u.\la\cont\lambda v.vV_1\ldots V_{\size\alpone}U\cont u) \cods{\strone^r}   \cont \cods\strtwo\\
    & \tobdet^3 &
    \cods{\strtwo}  V_1\ldots V_{\size\alpone}V   \cont \cods{\strone^r}\\
    \end{array}$$
    
       Two sub-cases, depending on whether $\strtwo$ is an empty or a compound string.
    
    \begin{enumerate}
	 \item \emph{$\strtwo$ is the empty string $\varepsilon$}. Then $\configtwo = (\elem_h, \elemblank, \epsilon, \state_l)$ and  
        $$\begin{array}{lll}
    \cods\epsilon V_1\ldots V_{\size\alpone}V   \cont \cods{\strone^r}
    & = & \\
    (\lambda\var_1.\ldots.\lambda\var_{\size\alpone}.\lambda\vartwo.\vartwo) V_1\ldots V_{\size\alpone}V   \cont \cods{\strone^r}
    & \tobdet^{\size\alpone +1} &\\
    V   \cont \cods{\strone^r}
    & = & \\
    (\lambda \cont .\appendchar{\elem_h}(\lambda w. (\la \varthree \transs  \varthree) \cont 
      	\tuple{ w, \cods\elemblank, \cods\varepsilon, \cods{\state_l} }
      ))   \cont \cods{\strone^r}
    & \tobdet & \\
    \appendchar{\elem_h} (\lambda w.(\la \varthree \transs  \varthree)\cont \tuple{ w, \cods\elemblank, \cods\varepsilon, \cods{\state_l} })  \cods{\strone^r} 
    & \tobdet^{ O(1) } & \reflemmaeq{append-char}\\
    (\lambda w.(\la \varthree \transs  \varthree)\cont \tuple{ w, \cods\elemblank, \cods\varepsilon, \cods{\state_l} }) \cods{(\elem_h\strone^r)}
    & \tobdet \\
    (\la \varthree \transs  \varthree)\cont \tuple{ \cods{(\elem_h\strone^r)}, \cods\elemblank, \cods\varepsilon, \cods{\state_l} }
    & \tobdet \\
    \transs  \cont \tuple{ \cods{(\elem_h\strone^r)}, \cods\elemblank, \cods\varepsilon, \cods{\state_l} }
    & = \\
    \transs  \cont \cods{ (\strone\elem_h, \elemblank, \varepsilon, \state_l) }

    & = \\
    \transs  \cont \cods\configtwo
    \end{array}$$

    \item \emph{$\strtwo$ is a compound string $\elem_p\strthree$}. Then $\configtwo = (\strone\elem_h, \elem_p, \strthree, \state_l)$ and 
       $$\begin{array}{lll}
    \cods{\strtwo}  V_1\ldots V_{\size\alpone}V   \cont \cods{\strone^r}
    & = & \\
    \cods{\elem_p\strthree} V_1\ldots V_{\size\alpone}V   \cont \cods{\strone^r}
    & = & \\    
    (\lambda\var_1.\ldots.
   \lambda\var_{\size\alpone}.\lambda\vartwo.\var_p\cods{\strthree} ) V_1\ldots V_{\size\alpone}V   \cont \cods{\strone^r}
    & \tobdet^{\size\alpone+1} & \\    
    V_p\cods{\strthree}   \cont \cods{\strone^r}
    & = & \\    
(\lambda u.\lambda \cont .\appendchar{\elem_h}(\lambda w.(\la \varthree \transs  \varthree)\cont 
      \tuple{w, \cods{\elem_p}, u, \cods{\state_l}})) \cods{\strthree}   \cont \cods{\strone^r}
    & \tobdet^2 & \\ 
\appendchar{\elem_h}(\lambda w.(\la \varthree \transs  \varthree)\cont 
      \tuple{w, \cods{\elem_p}, \cods{\strthree}, \cods{\state_l}})\cods{\strone^r}
 & \tobdet^{ O(1) } & \reflemmaeq{append-char}\\
(\lambda w.(\la \varthree \transs  \varthree)\cont 
      \tuple{w, \cods{\elem_p}, \cods{\strthree}, \cods{\state_l}}) \cods{(\elem_h\strone^r)}
 & \tobdet\\
(\la \varthree \transs  \varthree)\cont 
      \tuple{\cods{(\elem_h\strone^r)}, \cods{\elem_p}, \cods{\strthree}, \cods{\state_l}}
       & \tobdet\\
      \transs \cont \tuple{\cods{(\elem_h\strone^r)}, \cods{\elem_p}, \cods{\strthree}, \cods{\state_l}}
 & =\\

\transs \cont \cods{(\strone\elem_h, \elem_p, \strthree, \state_l)}
 & =\\
\transs  \cont  \cods\configtwo
    \end{array}$$

\end{enumerate}

     \end{enumerate}
 
 \end{proof}

Straightforward inductions on the length of executions provide the following corollaries.
 
\begin{corollary}[Executions]
\label{coro:exec}

Let $\M$ be a Turing machine. The term $\transs$ encoding $\M$ as given by \reflemma{trans-sim} is such that
for every configuration $\config$ 
\begin{enumerate}
  \item \emph{Finite computation}:
  if $\configtwo$ is a final configuration reachable from $\config$ in $n$ transition steps
  then $\transs \cont \cods\config \tobdet^{O(n)} \cont \cods\configtwo$;
  \item \emph{Diverging computation}: if there is
  no final configuration reachable from $\config$ then $\transs \cont \cods\config$ diverges.
\end{enumerate}
\end{corollary}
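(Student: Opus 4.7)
\medskip

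The plan is to reduce both statements to repeated applications of \reflemma{trans-sim}, proceeding by induction on the length of the machine computation for item~1 and by a co-inductive / iterative argument for item~2.

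For item~1, I would do induction on the number $n$ of transition steps separating $\config$ from the final configuration $\configtwo$. In the base case $n=0$ the configuration $\config$ already equals $\configtwo$ and is therefore final, so the \emph{final configuration} clause of \reflemma{trans-sim} yields $\transs\cont\cods\config\tobdet^{O(1)}\cont\cods\config=\cont\cods\configtwo$ directly. For the inductive step, let $\config \tomachtur \config' \tomachtur^{n-1} \configtwo$ be the decomposition given by determinism of $\M$ (so $\config$ cannot be final, since it reduces). The \emph{non-final configuration} clause of \reflemma{trans-sim} gives $\transs\cont\cods\config\tobdet^{c}\transs\cont\cods{\config'}$ for some constant $c$ independent of $\config$, and the induction hypothesis gives $\transs\cont\cods{\config'}\tobdet^{O(n-1)}\cont\cods\configtwo$. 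Composing the two reductions yields $O(n)$ steps overall.

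For item~2, the key observation is that because $\M$ is deterministic and no final configuration is reachable from $\config$, the machine produces an infinite sequence of configurations $\config = \config_0 \tomachtur \config_1 \tomachtur \config_2 \tomachtur \cdots$, none of which is final. Iterating the \emph{non-final configuration} clause of \reflemma{trans-sim} then provides an infinite chain
\[
\transs\cont\cods\config \;\tobdet^{O(1)}\; \transs\cont\cods{\config_1} \;\tobdet^{O(1)}\; \transs\cont\cods{\config_2} \;\tobdet^{O(1)}\; \cdots
\]
in $\detLam$. Since each segment contains at least one $\beta$-step (inspection of the simulation lemma shows the cost is a strictly positive constant), concatenating these segments yields an infinite $\tobdet$-reduction starting from $\transs\cont\cods\config$, which is exactly what it means for that term to diverge.

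The potential subtlety — and the only place where one has to be careful — is in item~2: one must argue that the $O(1)$-length reductions supplied by \reflemma{trans-sim} in the non-final case are strictly positive, so that their concatenation is genuinely infinite rather than eventually stationary. This is immediate from the explicit counts in the proof of \reflemma{trans-sim} (the first two $\beta$-steps alone, unfolding $\fix\transaux$, already show the cost is $\geq 2$), so no real difficulty arises; the whole corollary is indeed a routine induction on top of \reflemma{trans-sim}.
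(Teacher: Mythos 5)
Your proposal is correct and matches the paper's intent exactly: the paper dispatches this corollary with the single remark that ``straightforward inductions on the length of executions'' suffice, and your argument is precisely that induction (plus the iteration of the non-final clause for divergence), spelled out. The one point you rightly flag --- that each non-final segment costs a strictly positive number of $\beta$-steps, so the concatenation in item~2 is a genuinely infinite reduction --- is the only detail worth making explicit, and you handle it correctly.
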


 \paragraph{The simulation theorem.}
We now have all the ingredients for the final theorem of this note.

\begin{theorem}[Linear simulation]
\label{thm:main-simulation}
Let $\alpone$ be an alphabet and $f:\alpone^*\rightarrow\alpone^*$ a function computed by a Turing machine
$\M$ in time $g$. Then there is an encoding $\cods{\cdot}$ into $\detLam$ of $\alpone$, strings, and Turing machines over $\alpone$ such that for every $\strone\in\alpone^*$,
$\cods\M \cods\strone \tobdet^n \cods{f(\strone)}$
where $n=\Theta(g(\size\strone)+\size\strone)$.
\end{theorem}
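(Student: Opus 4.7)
The plan is to define $\cods\M$ as the CPS-composition of the three terms built in the preceding lemmas, plugging in the identity as the final continuation. Concretely, I would set
$$\cods\M \defeq \la\strone \inits (\la y \transs (\la z \finals (\la w w) z) y) \strone,$$
so that evaluation first converts the input into the initial configuration, then iterates the transition function until a final configuration is reached, and finally extracts and flattens the output string, which is returned by the identity continuation.

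The verification is then a routine chaining of results already proven. First, a constant number of $\beta$-steps peel off the outer abstraction applied to $\cods\strone$. Then \reflemma{init-config} yields $\inits \cont \cods\strone \tobdet^{\Theta(\size\strone)} \cont \cods\initconfigs$ for the continuation $\cont = \la y \transs (\la z \finals (\la w w) z) y$, after which one more $\beta$-step produces $\transs (\la z \finals (\la w w) z) \cods\initconfigs$. Since $\M$ computes $f$ in time $g$, the initial configuration for $\strone$ reaches a final configuration $\config_{\tt fin}$ for $f(\strone)$ in $g(\size\strone)$ transition steps, so \refcoro{exec} gives $\transs (\la z \finals (\la w w) z) \cods\initconfigs \tobdet^{O(g(\size\strone))} (\la z \finals (\la w w) z) \cods{\config_{\tt fin}}$. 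A further $\beta$-step and \reflemma{final-config} produce $(\la w w) \cods{f(\strone)}{}^{\alpone^*}$ in $\Theta(\size{f(\strone)})$ additional steps, and one last $\beta$-step yields $\cods{f(\strone)}$.

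Summing the three contributions gives a total cost $n = \Theta(\size\strone) + O(g(\size\strone)) + \Theta(\size{f(\strone)}) + O(1)$. The only non-immediate observation is the bound $\size{f(\strone)} = O(\size\strone + g(\size\strone))$: indeed, at every transition the tape grows by at most one character (a right or left move may append one fresh blank), so the final tape length, and in particular $\size{f(\strone)}$, is at most $\size\strone + g(\size\strone)$. Hence $n = O(g(\size\strone) + \size\strone)$, while the lower bound $n = \Omega(g(\size\strone) + \size\strone)$ is forced by the $\Theta(\size\strone)$ cost of $\inits$ and, when $g(\size\strone) > 0$, by the $\Omega(g(\size\strone))$ cost of $\transs$ on a non-trivial computation. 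Therefore $n = \Theta(g(\size\strone) + \size\strone)$, as required.

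No step is really hard: the main obstacle, if any, is just bookkeeping — making sure that the CPS plumbing around the three lemmas lines up (each callee receives exactly the continuation that triggers the next stage) and that the $\size{f(\strone)}$ term is correctly absorbed into $O(g(\size\strone) + \size\strone)$ via the tape-growth bound.
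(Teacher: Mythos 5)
Your proof is correct and follows essentially the same route as the paper: the term is the ($\eta$-expanded) CPS composition $\inits(\transs(\finals(\lambda w.w)))$ applied to the input, and the cost bound is obtained by chaining Lemma~\ref{l:init-config}, Corollary~\ref{coro:exec}, and Lemma~\ref{l:final-config} exactly as in the paper's argument. Your explicit observation that $\size{f(\strone)} = O(\size\strone + g(\size\strone))$ because the tape grows by at most one cell per transition is a detail the paper's proof glosses over (it writes the output-extraction cost as $\Theta(\size\strone)$ where $\Theta(\size{f(\strone)})$ is what Lemma~\ref{l:final-config} actually gives), so this is a welcome tightening rather than a divergence.
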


\begin{proof}
Morally, the term is simply
$$
\function \defeq \inits (\transs(\finals(\lambda w.w))
$$
where $\lambda w.w$ plays the role of the initial continuation.

Such a term however does not belong to the deterministic $\l$-calculus, because the right subterms of applications are not always values. The solution is simple, it is enough to $\eta$-expand the arguments. Thus, define

$$
\function \defeq \inits (\la\vartwo\transs(\la \var\finals(\lambda w.w)\var)\vartwo)
$$

Then
$$\begin{array}{lll}
    \function \cods{\strone} 
    & = & \\
    \inits (\la\vartwo\transs (\la \var\finals (\lambda w.w)\var)\vartwo) \cods{\strone} 
    & \tobdet^{\Theta(\size\strone)} & (\mbox{by }\reflemmaeq{init-config})\\    
    (\la\vartwo\transs (\la \var\finals (\lambda w.w)\var)\vartwo) \cods{\initconfig}
    & \tobdet\\
    \transs (\la \var\finals (\lambda w.w)\var) \cods{\initconfig}
    & \tobdet^{\Theta(\size{g(\strone)})} & (\mbox{by }\refcoroeq{exec})\\
    (\la \var\finals (\lambda w.w)\var) \cods{\config_{\tt fin}(f(\strone))}
    & \tobdet\\
    \finals (\lambda w.w) \cods{\config_{\tt fin}(f(\strone))}
    & \tobdet^{\Theta(\size\strone)} & (\mbox{by }\reflemmaeq{final-config})\\    
    (\lambda w.w) \cods{f(\strone)}
    & \tobdet &\\
    \cods{f(\strone)}
    \end{array}$$
\end{proof}

\bibliographystyle{alpha}


\end{document}